\def\pgfdecoratedcontourdistance{0pt}
    \pgfmathsetlengthmacro\pgfdecoratedcontourdistance{#1}}
\let\pgf@decorate@firstsegmentangle\pgfdecoratedangle]{%
    \pgfpathmoveto{\pgfpointlineattime{.5}
      {\pgfqpoint{0pt}{\pgfdecoratedcontourdistance}}
      {\pgfqpoint{\pgfdecoratedinputsegmentlength}{\pgfdecoratedcontourdistance}}}%
  }%
      \pgfmathsetmacro\pgfdecoratedangletonextinputsegment{%
        -\pgfdecoratedangle+\pgf@decorate@firstsegmentangle}%
    \pgfmathsetlengthmacro\pgf@decoration@contour@shorten{%
      -\pgfdecoratedcontourdistance*cot(-\pgfdecoratedangletonextinputsegment/2+90)}%
\tikzset{
  contour/.style={
    decoration={
      name=contour lineto closed,
      contour distance=#1
    },
    decorate}}
\newtheorem{theorem}{Theorem}
\newtheorem{claim}{Claim}
\newtheorem{corollary}{Corollary}
\newtheorem{lemma}{Lemma}
\newcommand{\Oh}{{\ensuremath{\mathcal{O}}}}
\newcommand{\prob}[1]{\textsc{#1}\xspace}
\theoremstyle{plain}
\newcommand{\ceil}[1]{\left\lceil #1 \right\rceil}
\newcommand{\etal}{\textit{et al.}\@}
\newcommand{\opt}{\textsc{opt}\xspace}
\newcommand{\algo}{\textsc{alg}\xspace}
\newcommand{\parent}{\textrm{parent}}
\title{On the Extended TSP Problem}
\author[1,2]{Juli\'{a}n Mestre}
\author[1]{Sergey Pupyrev}
\author[2]{Seeun William Umboh}
\affil[1]{Facebook Inc., USA}
\affil[2]{University of Sydney, Australia.}
\date{}
\begin{document}

\maketitle

\begin{abstract}
    We initiate the theoretical study of \prob{Ext-TSP}, a problem that originates in the area of profile-guided binary optimization. Given a graph $G=(V, E)$ with positive edge weights $w: E \rightarrow R^+$, and a non-increasing  discount function $f(\cdot)$ such that $f(1) = 1$ and $f(i) = 0$ for $i > k$, for some parameter $k$ that is part of the problem definition. The problem is to sequence the vertices $V$ so as to maximize $\sum_{(u, v) \in E} f(|d_u - d_v|)\cdot w(u,v)$, where $d_v \in \{1, \ldots, |V| \}$ is the position of vertex~$v$ in the sequence.

    We show that \prob{Ext-TSP} is APX-hard to approximate in general and we give a $(k+1)$-approximation algorithm for general graphs and a PTAS for some sparse graph classes such as planar or treewidth-bounded graphs.

    Interestingly, the problem remains challenging even on very simple graph classes; indeed, there is no exact $n^{o(k)}$ time algorithm for trees unless the ETH fails. We complement this negative result with an exact $n^{O(k)}$ time algorithm for trees.
\end{abstract}

\section{Introduction}

Profile-guided binary optimization (PGO) is an effective technique in modern compiles to improve performance by optimizing how binary code is laid out in memory. At a very high level, the idea is to collect information about typical executions of an application and then use this information to re-order how code blocks are laid out in the binary to minimize instruction cache misses, which in turn translates into running time performance gains. Newell and Pupyrev \cite{NewellP20} recently introduced an optimization problem, which they call the Extended TSP (\prob{Ext-TSP}) problem that aims at maximizing the number of block transitions that do not incur a cache miss.

The input to the \prob{Ext-TSP} problem is a weighted directed graph $G=(V, E)$, which in the context of PGO corresponds to the control flow representation of the code we are trying to optimize: Every node $u \in V$ corresponds to a basic block of code (for the purposes of this paper we can think of each of these blocks as a single instruction that takes a fixed amount of memory to encode); every edge $(u, v) \in E$ represents the possibility of an execution jumping from $u$ to $v$, and the weight $w(u, v)$ captures how many times the profiler recorded said jump during the data collection phase. Our ultimate goal is to find a linear ordering of the nodes, each of which represents a possible code layout of the binary; we let this linear ordering be encoded by a one-to-one function $d:V \rightarrow \{1, \ldots, |V|\}$. Finally, each edge $(u, v)$ contributes $f(|d_u - d_v|) \cdot w(u,v)$ to the objective, where $|d_u - d_v|$ is the distance between the edge endpoints in the linear ordering, and $f(\cdot)$ is a non-increasing discount function such that $f(1) = 1$ and $f(i) = 0$ for $i > k$, where $k = O(1)$ is part of the problem definition. 

Newell and Pupyrev~\cite{NewellP20} designed and evaluated heuristics for \prob{Ext-TSP} leading to significantly faster binaries. Their implementation is available in the open source project Binary Optimization and Layout Tool (BOLT) \cite{bolt,bolt-paper, NewellP20}. In their experiments, they found that setting $k$ to be a small constant\footnote{To be more specific, $k$ is the number of blocks that can fit into 1024 bytes of memory.} and $f(|d_u - d_v|) = \left(1 - \frac{|d_u - d_v|}{k}\right)$ for $1 < |d_u - d_v|< k$, yields the best results. The high level intuition is that the discount factor is a proxy for the probability that taking the jump causes a cache miss. Thus,  the \prob{Ext-TSP} objective aims at maximizing the number of jumps that do not cause a cache miss.

In this paper we initiate the theoretical study of \prob{Ext-TSP} by providing a variety of hardness and algorithmic results for solving the problem both in the approximate and the exact sense in both general and restricted graph classes.

\subsection{Our results}

We show that \prob{Ext-TSP} is APX-hard to approximate in general. We give a polynomial time $(k+1)$-approximation algorithm and a $n^{O(k/\epsilon)}$ time $(2+\epsilon)$-approximation for general graphs. We also give a $n^{O(k/\epsilon)}$ time $(1+\epsilon)$-approximation for some sparse graphs classes such as planar or treewidth-bounded graphs.

Interestingly, the problem remains challenging even on very simple graph classes; indeed, there is no exact $n^{o(k)}$ time algorithm for trees unless the ETH fail. Finally, we complement this negative result with an exact $n^{O(k)}$ time algorithm for trees.

\subsection{Related work}

PGO techniques have been studied extensively in the compiler's community. Code re-ordering is arguably the most impactful 
optimization among existing PGO techniques~\cite{bolt-paper}. The classical approach for code layout is initiated by 
Pettis and Hansen~\cite{PH90}, who formulated the problem of finding an ordering of basic blocks as a variant of the 
maximum directed \prob{TRAVELING SALESMAN PROBLEM} on a control flow graph. They describe two greedy heuristics 
for positioning of basic blocks. Later, one of the heuristics (seemingly producing better results) has been adopted by 
the community, and it is now utilized by many modern compilers and binary optimizers, 
including LLVM and GCC. Very recently, Newell and Pupyrev \cite{NewellP20} extended the classical model and suggested a new
optimization problem, called \prob{Extended-TSP}. With an extensive evaluation of real-world and synthetic applications,
they found the objective of \prob{Ext-TSP} is closely related to the performance of a binary; thus, an improved solution
of the problem yields faster binaries. 
We refer to \cite{NewellP20} for a complete background on this literature.

The problem of laying out data in memory to minimize the cache misses has been studied in the Algorithms community \cite{AchlioptasCN00, FiatKLMSY91, McGeochS91, SleatorT85}. In this setting a number of requests arrives online and our job is to design an eviction policy~\cite{Young2016}. Even though ultimately, we are also concerned with minimizing cache misses, there are two main differences: first, the profile data gives us information about future request that we can exploit to improve locality; second, this optimization is done at the compiler, which does not have control over the operating system's cache eviction policy. The benchmark used for online algorithms is the competitive ratio: the number of cache misses incurred by the online algorithm divided by the number of cache misses incurred by an optimal algorithm that knows the entire sequence of requests in advance. It is known that the best competitive ratio is $\Theta(k)$ for deterministic algorithms and is $\Theta(\log k)$ \cite{AchlioptasCN00}, where $k$ is the size of the cache. 

There are many classical optimization problems that seek for to sequence the vertex set of a graph to optimizing some objective function. The two most closely related to our problem are \prob{Max TSP} and \prob{Min Bandwidth}.

An instance of \prob{Max TSP} consists of a weighted undirected graph and our objective is sequence the vertex set to maximize the weight of adjacent nodes. The problem is known to be APX-hard \cite{PapadimitriouY93} and a number of approximation algorithms are known \cite{FisherNW79,KosarajuPS94,HassinR98,Serdyukov, HassinR00,ChenOW05,PaluchMM09,DudyczMPR17}, with the best being the $5/4$-approximation of Dudycz~\etal~\cite{DudyczMPR17} that runs in $O(n^3)$. 

An instance of \prob{Min Bandwidth} consists of an undirected graph and our objective is to sequence the vertex set to minimize the maximum distance between the endpoints of any edge in the graph. The problem admits an $n^{O(b)}$ time exact algorithm \cite{Sax80}, where $b$ is the bandwidth of the graph. On the negative side, there is no exact $g(b)n^{o(b)}$ time algorithm \cite{DL14} and unless the ETH fails, even in trees of pathwidth at most two. Several polylogarithmic approximation algorithms exist for different graphs classes \cite{Feige00,FeigeT09,Gupta01}; on the other hand, it is NP-hard to approximate the problem within any constant even for caterpillars \cite{DubeyFU11}.

A somewhat related problem is the \prob{Min Linear Arrangement} problem. An instance consists of an undirected graph and our objective is to sequence the vertex set to minimize the sum of the distances between the endpoints of each edge in the graph. Minimizing this objective function is equivalent to maximizing the \prob{Ext TSP} objective function with the discount function $f(i) = 1 - i/n$. \prob{Min Linear Arrangement} admits polynomial-time exact algorithms on trees \cite{Shiloach79}; however, we are not aware of any results for higher treewidth. There are several polylogarithmic approximation algorithms \cite{EvenNRS00, RaoR04, CharikarHKR10, FeigeL07} based on the spreading metrics technique of Even et al.~\cite{EvenNRS00}; however it is unclear how these techniques can be made to work for \prob{Ext TSP}. Moreover, for our applications, we are interested in the regime where $k \ll n$, so this connection does not yield a result of practical relevance.

\section{Problem definition and hardness}

An instance of \prob{Ext-TSP} problem consists of a directed graph $G=(V, E)$ with positive edge weights $w: E \rightarrow R^+$ and a non-increasing  discount function $f(\cdot)$ where $f(1) = 1$ and $f(i) = 0$ for $i > k$, where $k$, where $k$ is a parameter that is part of the problem definition. The problem is to sequence the vertices $V$ so that $d_v \in \{1, \ldots, |V| \}$ is position of vertex $v$ with the objective to maximize

\[ \sum_{(u, v) \in E} f(|d_u - d_v|) \cdot w(u, v) \]

The first thing to notice is that the fact that we could have defined the problem on an undirected graph since the contribution of an edge $(u,v)$ to the objective only depends on its weight and the distance between its two endpoints, and is independent of whether it is a forward or a backward jump. Indeed, we can reduce the undirected case to the directed case and vice versa: Given an undirected graph, we can orient the edges arbitrarily; while given a directed graph we can combine pairs of anti-parallel edges into a single edge by adding up their weight. 

In order to simplify our exposition, from now on we assume the input graph is undirected. Right away, this allows us to relate \prob{Ext-TSP} to \prob{Max TSP} and \prob{Min Bandwidth}, which in turn yields the following hardness results.

\begin{theorem}
\label{thm:hardness}
    The \prob{Ext-TSP} problem exhibits the following hardness:
    \begin{enumerate}
        \item it is APX-hard, even when $k=1$,
        \item does not admit an exact $n^{o(k)}$ time algorithm unless the ETH fails, even in trees.
    \end{enumerate}
\end{theorem}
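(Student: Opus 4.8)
The plan is to establish both hardness results by direct reductions from the two problems singled out in the related-work discussion: \prob{Max TSP} for part~(1) and \prob{Min Bandwidth} for part~(2). In each case the reduction is essentially the identity on vertex orderings, so the real work lies in choosing the discount function and the edge weights so that the \prob{Ext-TSP} objective encodes the source objective.

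For part~(1), I would first observe that when $k=1$ the discount function collapses to the indicator of adjacency in the ordering: an edge $(u,v)$ contributes $w(u,v)$ exactly when $|d_u - d_v| = 1$, and $0$ otherwise. Since the pairs of vertices occupying consecutive positions of a linear ordering form a Hamiltonian path on $V$, the objective is precisely the weight of a maximum-weight Hamiltonian path (treating non-edges as weight-$0$ edges of a complete graph). Thus \prob{Ext-TSP} with $k=1$ is literally the maximum Hamiltonian path problem, and I would invoke the APX-hardness of \prob{Max TSP} \cite{PapadimitriouY93}. The one point needing care is that \prob{Max TSP} is usually stated for Hamiltonian cycles whereas our objective yields a path; I would bridge this with the standard approximation-preserving reduction between the cycle and path versions (their optima differ by a single edge, and the $L$-reduction is routine), which is where essentially all of the bookkeeping sits.

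For part~(2), I would take the threshold discount function $f(i) = 1$ for $1 \le i \le k$ and $f(i) = 0$ for $i > k$, which is non-increasing and satisfies $f(1)=1$, and set every edge weight to $1$. With this choice an edge contributes $1$ exactly when the distance between its endpoints in the ordering is at most $k$, so the objective equals the number of edges of stretch at most $k$. Consequently the \prob{Ext-TSP} optimum equals $|E|$ if and only if there is an ordering in which every edge has stretch at most $k$, i.e.\ if and only if the graph has bandwidth at most $k$. Applied to a tree instance, an exact \prob{Ext-TSP} algorithm running in time $n^{o(k)}$ would decide bandwidth at most $k$ on trees in time $n^{o(k)}$ (the reduction preserves the instance and is linear), contradicting the lower bound of \cite{DL14}, which rules out $g(b)\,n^{o(b)}$-time bandwidth algorithms on trees under the ETH.

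I expect the only genuine obstacle to be the cycle-versus-path mismatch in part~(1); everything else reduces to verifying that the chosen $f$ has the required monotonicity and boundary values and that the encodings are exact. Part~(2) in particular is clean precisely because the threshold $f$ turns the optimum into an exact indicator of the bandwidth predicate, so the ETH-based lower bound transfers with no loss.
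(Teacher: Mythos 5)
Your proposal matches the paper's proof essentially verbatim: part (1) is the same reduction from \prob{Max TSP} with $k=1$, $f(1)=1$, $f(2)=0$, and part (2) is the same threshold-discount reduction from \prob{Min Bandwidth} invoking the ETH lower bound of \cite{DL14} on trees. The only difference is that you handle the Hamiltonian-cycle-versus-path mismatch explicitly, whereas the paper sidesteps it by stating the \prob{Max TSP} objective directly in linear-ordering (path) form; your extra bridging step is correct and harmless.
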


\begin{proof}
    For the first part, we use the relation to \prob{Max TSP}. Recall that the objective of the latter problem is to maximize $\sum_{(u, v) \in E: |d_u - d_v| = 1} w_{(u, v)}$ given an undirected graph. We can reduce an instance of \prob{Max TSP} to an undirected instance of \prob{Ext-TSP} with $k= 1$ where $f(1) = 1$ and $f(2) = 0$. Therefore, \prob{Ext-TSP} is APX-hard even when $k=1$.

    For the second part, we use the relation to \prob{Min Bandwidth}. Recall that the objective of the latter problem is to minimize $\max_{(u, v) \in E} |d_u - d_v|$, the optimal value of this objective is called the \emph{bandwidth} of the graph. Given an instance $G$ with bandwidth $b$, consider  the \prob{Ext-TSP} instance where $f(i) = 1$ for $0 \leq i \leq k$ and $f(k+1) = 0$; if $k= b$ then the objective of this instance must be $w(E)$ as there exists a sequencing where the endpoints of every edge are within at most $k$ of one another. It follows that, if we could have an $n^{o(k)}$ time algorithm for \prob{Ext-TSP} that implies an $n^{o(b)}$ time algorithms, which does not exist even for very simple trees unless the ETH fails \cite{DL14}. 
\end{proof}

\section{Exact Algorithms}

In this section we complement the hardness from the previous section by developing an exact algorithm for trees whose running time is polynomial when $k = \Oh(1)$.

\begin{theorem}
    There is an $n^{O(k)}$ time algorithm for solving \prob{Ext-TSP} optimally on trees.
\end{theorem}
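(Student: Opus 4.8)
The plan is to design a dynamic program that builds the linear ordering one position at a time, from position $1$ to position $n$, exploiting the fact that the objective is \emph{local}: since $f(i) = 0$ for $i > k$, an edge $(u,v)$ contributes only when $|d_u - d_v| \le k$. Consequently, if we fix the vertices greedily from left to right, the marginal reward collected when we decide the vertex at position $t$ depends only on the identities of the vertices occupying the previous $k$ positions $t-1, \ldots, t-k$: we need them merely to look up which are tree-neighbors of the new vertex and to evaluate $f$ and $w$ on the corresponding edges. This immediately suggests a DP whose state records the ordered \emph{window} $W_t = (v_{t-k+1}, \ldots, v_t)$ of the last $k$ placed vertices; there are at most $n^{k}$ such windows, and the transition from $W_t$ to $W_{t+1}$ consists of choosing $v_{t+1}$ and adding $\sum_{i=1}^{k} f(i)\, w(v_{t+1-i}, v_{t+1})$ over the edges present, an $n^{O(k)}$ computation in total.

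The difficulty this skeleton hides is the \emph{permutation constraint}: a transition may only place a vertex not yet used, which naively requires remembering the entire placed set $S_t$, of which there are $2^{n}$. This is exactly where the tree structure must be used, and I expect it to be the main obstacle. The key observation is that the window $W_t$, being a set of at most $k$ vertices, is a \emph{separator} of $T$: deleting it breaks the tree into connected components, and because $f$ vanishes beyond distance $k$, no vertex placed at a position $\le t-k$ can ever again earn reward with a vertex placed after it. I would therefore prove a structural exchange lemma asserting the existence of an optimal ordering that is \emph{tidy}: whenever further reward is still attainable, the set of vertices already placed outside the current window is a union of whole components of $T - W_t$. The intuition is that vertices sitting in components that can contribute no further reward may be slid toward the far end of the ordering without decreasing the objective, so we never need to partially ``enter'' a component except through interactions already recorded in the window.

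Granting tidiness, the available set is no longer an arbitrary subset but is pinned down by the window together with a bounded \emph{boundary profile}: which of the $O(k)$ edges leaving $W_t$ lead to components that are still unplaced, plus the number of remaining positions. This collapses the number of reachable states to $n^{O(k)}$, and the left-to-right transition---choosing the next vertex either from a component currently incident to the window or committing to exhaust such a component---updates both the reward (via $f$ and $w$ on the $\le k$ window edges) and the profile in $n^{O(1)}$ time; reading off the best fully-filled state gives the optimum, for an overall running time of $n^{O(k)}$. The crux, and where the real work lies, is establishing the tidiness lemma rigorously: one must show the exchange never sacrifices reward, treat carefully the boundary cases in which a component straddles the window, and verify that the resulting profile is genuinely of size $n^{O(k)}$ rather than merely bounded by the (possibly large) number of components hanging off $W_t$.
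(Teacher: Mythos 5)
There is a genuine gap: the ``tidiness'' exchange lemma on which your whole state-collapse rests is false. Take $k=1$, unit weights, and a spider consisting of a center $c$ with three legs of length $\ell \geq 2$. With $k=1$ only consecutive pairs score, so any ordering realizes a family of vertex-disjoint paths in the tree; the optimum realizes every edge except one edge at $c$, i.e.\ the realized edges form two blocks, $B_1 = \text{leg}_3$ and $B_2 = \text{leg}_1 \cup \{c\} \cup \text{leg}_2$, each laid out contiguously, in one of the two orders. Suppose $B_1$ comes first. When the window $W_t$ sits in the middle of $\text{leg}_1$, the placed-outside-window set is $\text{leg}_3$ together with the leaf-side part of $\text{leg}_1$; but $\text{leg}_3$ lies in the \emph{same} component of $T - W_t$ as $c$, $\text{leg}_2$, and the unplaced remainder of $\text{leg}_1$, so the placed set is not a union of whole components, while further reward is certainly still attainable. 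The symmetric violation occurs if $B_2$ comes first (with the window mid-$\text{leg}_2$, the still-unplaced $\text{leg}_3$ shares a component with already-placed vertices). So \emph{every} optimal ordering violates tidiness at some step, no exchange can fix it, and your boundary profile (placed/unplaced per component) cannot represent the partially placed components that genuinely occur. A secondary unresolved issue, which you flag yourself, is that the number of tree edges leaving $W_t$ is not $O(k)$: a window vertex may have degree $\Theta(n)$ (e.g.\ a star center), so even granting tidiness the profile is not obviously of size $n^{O(k)}$.

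The missing idea, which is how the paper proceeds, is compositional rather than a single left-to-right sweep: one first argues WLOG that each connected component of the set of realized edges occupies a contiguous stretch of the ordering, and then runs the windowed DP \emph{only on the component containing the root} of (a rooted version of) the tree, recursing on the subtrees hanging off that component. Concretely, the paper's states are tuples $(z,\sigma,R)$ where $\sigma$ is a window of exactly $k$ nodes of the subtree $T_z$ and $R$ is the set of already-realized edges incident on $\sigma$; since realized edges join vertices within $k$ positions, $R$ is representable by listing at most $k$ additional nodes, giving $n^{O(k)}$ states without any component profile. These states form a layered graph in which the edge weights include the recursively computed optima $\opt[T_u]$ of the ``dangling'' subtrees --- exactly the mechanism that absorbs a block like $\text{leg}_3$ above, which your single sweep tries, and fails, to thread through one window --- and consistency of the window with the rest of the tree is enforced not by tidiness but by local ``entry port'' conditions (for each outside vertex, the realized edges incident on the window must leave its access paths either all open or all closed). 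If you replace your tidiness lemma with this contiguity-plus-recursion decomposition, the rest of your window DP goes through; without it, the proposal does not yield a correct algorithm.
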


\begin{proof}
    Let $T$ be the input tree. Consider an optimal solution \opt, and let $O$ be the set of \emph{realized edges}, that is, the subset of edges whose endpoints are at distance at most $k$ in \opt. Without loss of generality we assume that each connected component of $O$ is laid out in a contiguous stretch in the optimal sequencing. Using this simple insight, we use dynamic programming (DP) to build a solution for the connected component $C$ that has the root of the tree, and solve separately the subtree rooted at nodes that are not in $C$ but that have a parent in $C$; we call such nodes \emph{dangling} nodes of $C$ (see Figure~\ref{fig:dangling}). Without loss of generality, we assume that $|C| \geq k$. If $C$ happens to be smaller, we can guess the optimal sequencing for $C$ (there are only $n^{k-1}$ choices), solve separately the subproblems rooted at dangling nodes of $C$, and keep the best solution.

    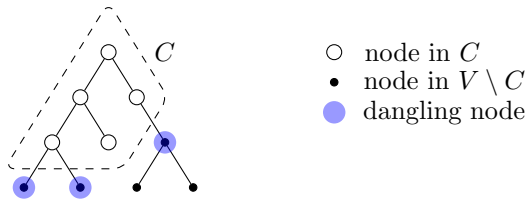
\begin{figure}
        \centering
        \begin{tikzpicture}[
            yscale=0.4,xscale=0.5,
            inC/.style={circle,draw=black,fill=white, inner sep=2pt},
            notC/.style={circle,draw=black,fill=black,inner sep=1pt},
            dangling/.style={circle,fill=blue,opacity=0.4,inner sep=3pt}            
            ]
            % first example
            \node[inC] at (0,0) (c1) {} 
                child { node[inC] (c2) {}
                    child { node[inC] (c3) {} 
                        child {node[notC] (d1) {} edge from parent}
                        child {node[notC] (d2) {} edge from parent}
                    }
                    child { node[inC] (c4) {}
                    }
                }
                child { node[inC]  (c5) {}
                    child[missing] 
                    child { node[notC] (d3) {} edge from parent
                        child {node[notC]  {} edge from parent}
                        child {node[notC] {} edge from parent}
                    }
                }                
                ;
                
            \foreach \x in {1, 2, 3}
                \node[dangling] at (d\x) {};

            \node[inC,label={[label distance=5pt]right:{node in $C$}}] at (6,0) {};
            \node[notC,label={[label distance=6pt]right:{node in $V \setminus C$}}] at (6,-1) {};
            \node[dangling,label={[label distance=3pt]right:{dangling node}}] at (6,-2) {};

            \path[preaction={contour=-10pt,draw,dashed, rounded corners=4}] (c1.center) -- (c2.center) -- (c3.center) -- (c4.center) -- (c5.center) -- cycle;
            \node at (1.5, 0) {$C$};

        \end{tikzpicture}

        \caption{Dangling nodes of a root connected component $C$.}
        \label{fig:dangling}
    \end{figure}

    Our algorithm is based on a subtle DP formulation. Each DP state represents succinctly a partial solution for a subtree of $T$, and it is defined by a tuple $(z, \sigma, R)$, where
    \begin{itemize}
        \item $z \in V$ is the root of the subtree of $T$ we are trying to solve,
        \item $\sigma$ is a sequence of \emph{exactly} $k$ nodes in $T_z$, the subtree of $T$ rooted at $z$,
        \item $R$ is the set of edges incident on $\sigma$ that have already been realized.
    \end{itemize}

    It is worth noting that although the structure of the DP states builds on that used in the algorithm of Saxe~\cite{Sax80} for \prob{Min Bandwidth}, the fact that we do not necessarily realize all edges means we need new ideas and a more involved DP formulation to solve \prob{Ext-TSP}.

    Our high level goal is to build an edge weighted graph $H$ over these tuples plus two dummy source and sink nodes $s$ and $t$ such that every optimal solution to the \prob{Ext-TSP} problem on the subtree $T_z$ induces an $s$-$t$ path whose weight equals the value of this solution; and conversely, every $s$-$t$ path induces an \prob{Ext-TSP} solution of $T_z$ whose value equals the weight of the path. Thus, once the graph is defined and the equivalence established, solving \prob{Ext-TSP} amounts to a shortest path computation in $H$. 

    To provide some motivation and intuition on the definitions that will follow, consider an optimal solution of $T_z$ realizing a subset of edges $O$, where $C$ is the connected component of $(T_z, O)$ that contains the root $z$, and let $\tau$ be the optimal sequence for $C$. Note that $\tau$ realizes $O[C]$, and by our earlier assumption $|\tau| \geq k$. For each $j \in \{1, \ldots, |C| - k + 1\}$ we let $\sigma^j$ be the subsequence of $\tau$ from $j$ to $j + k -1$ and we let $R^j$ be the subset of edges realized by the first $j$ positions of $\tau$ that are incident on $\sigma^j$. Then the path induced by $\tau$ in $H$ will be 
    \[ s \rightarrow (z, \sigma^1, R^1) \rightarrow (z, \sigma^2, R^2) \rightarrow  \cdots \rightarrow (z, \sigma^{|C|-k + 1}, R^{|C|-k + 1}) \rightarrow t \]

    The weight of the first edge $s \rightarrow (z, \sigma^1, R^1)$ will be defined as the contribution of $\sigma^1$ to the objective, that is the total discounted (according to $\sigma^1$) weight of edges $R^1$. The weight of the last edge $(z, \sigma^{|C|-k + 1}, R^{|C|-k + 1}) \rightarrow t$ will be defined as the value of the subproblems defined by dangling nodes of $\sigma^{|C|-k + 1}$ not spanned by $R^{|C|-k + 1}$. Finally, the weight of an edge $(z, \sigma^j, R^j) \rightarrow (z, \sigma^{j+1}, R^{j+1})$ will be defined as the value of the subproblem defined by dangling nodes of $\sigma^j \setminus \sigma^{j+1}$ not spanned by $R^{j+1}$ plus the discounted weight of $R^{j+1} \setminus R^j$. Since we do not double count any contributions, the weight of the path adds up to the value of the optimal solution for $T_z$.

    Our goal is to impose some restrictions on the vertices and edges in $H$ so that every $s$-$t$ path induces a solution of equal value in $T_z$. To that end we will define the notion of valid tuples and valid edges, but before we do that, we must introduce a few more concepts.

    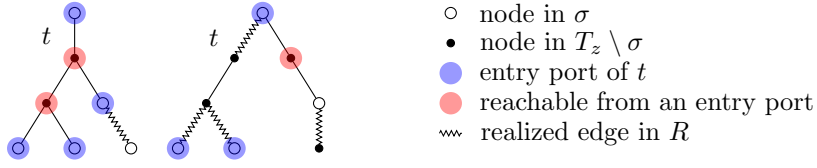
\begin{figure}
        \centering

        \begin{tikzpicture}[
            yscale=0.4,xscale=0.5,
            sigma/.style={circle,draw=black,fill=white, inner sep=1.5pt},
            T/.style={circle,draw=black,fill=black,inner sep=1pt},
            entry/.style={circle,fill=blue,opacity=0.4,inner sep=3pt},
            recheable/.style={circle,fill=red,opacity=0.4,inner sep=3pt}
            ]
            % first example
            \node[sigma] at (0,0) (e1) {} 
                child { node[T,label={[label distance=4pt]160:$t$}] (r1) {}
                    child { node[T] (r2) {} 
                        child {node[sigma] (e2) {}}
                        child {node[sigma] (e3) {}}
                    }
                    child { node[sigma] (e4) {}
                        child[missing]
                        child {node[sigma] {} edge from parent[decorate,decoration={zigzag,segment length=2,amplitude=1}]}
                    }
                };
                
            % second example
            \node[sigma] at (5,0) (e5) {} 
                child { node[T,label={[label distance=1pt]160:$t$}]  {}  edge from parent[decorate,decoration={zigzag,segment length=2,amplitude=1}]
                    child { node[T] {}  edge from parent[solid]
                        child {node[sigma] (e6) {} edge from parent[decorate,decoration={zigzag,segment length=2,amplitude=1}]}
                        child {node[sigma] (e7) {} edge from parent[decorate,decoration={zigzag,segment length=2,amplitude=1}]}
                    }
                    child[missing] 
                }
                child {node[T] (r3) {}
                    child[missing]
                    child {node[sigma] {}
                        child {node[T] {} edge from parent[decorate,decoration={zigzag,segment length=2,amplitude=1}]}
                        }
                    }
                ;

            \foreach \x in {1, 2, 3, 4, 5, 6, 7}
                \node[entry] at (e\x) {};

            \foreach \x in {1, 2, 3}
                \node[recheable] at (r\x) {};

            \node[sigma,label={[label distance=5pt]right:{node in $\sigma$}}] at (10,0) {};
            \node[T,label={[label distance=6pt]right:{node in $T_z \setminus \sigma$}}] at (10,-1) {};
            \node[entry,label={[label distance=3pt]right:{entry port of $t$}}] at (10,-2) {};
            \node[recheable,label={[label distance=3pt]right:{reachable from an entry port}}] at (10,-3) {};
            \draw [decorate,decoration={zigzag,segment length=2,amplitude=1}]
            (9.7, -4) -- +(0.6, 0) node [label={[label distance=0pt]right:{realized edge in $R$}}] {};

        \end{tikzpicture}

        \caption{
            \label{fig:entry-ports}
            Two example showing the entry ports of a node $t \in T_z \setminus \sigma$. On the left, all entry ports of $t$ are open, while on the right all entry ports of $t$ are closed.
        }
    \end{figure}

    Given a tuple $(z, \sigma, R)$ we say that a node $u \in \sigma$ is an \emph{entry port} for a node $t \in T_z \setminus \sigma$ if the unique path $P$ from $t$ to $u$ in $T$ does not go through any other vertex in $\sigma$; furthermore, we say that $u$ is a \emph{closed entry port} of $t$ if the edge in $P$ out of $u$ is in $R$, otherwise, we say $u$ is an \emph{open entry port} of $t$. Finally, we say that $t \in T_z \setminus \sigma$ is \emph{reachable} if all the entry ports of $t$ are open. See Figure~\ref{fig:entry-ports} for an example illustrating these definitions.

    A tuple $(z, \sigma, R)$ is \emph{valid} if for every $t \in T_z \setminus \sigma$ the entry ports $u \in \sigma$ for $t$ are either all closed or all open. Indeed if $(z, \sigma, R)$ was part of the path induced by some $\tau$ then either $t$ comes before $\sigma$ in $\tau$, in which case $t$ subtree spanned between the entry ports of $t$ must have been already realized; or $t$ comes after $\sigma$ in $\tau$, in which case said subtree will be realized later on. Thus, we can focus only on valid tuples. We define a graph $H$ over the valid tuples where we put a directed edge $(z, \sigma, R) \rightarrow (z, \sigma', R')$ if:
    \begin{itemize}
        \item $\sigma'$ is obtained from $\sigma$ by appending a reachable  node (reachable with respect to the first tuple) $v$ to $\sigma$ and removing the first node $u$ in $\sigma$,
        \item $R'$ equals $R$ minus edges in $R$ that are incident on $u$ but not on any other node in $\sigma$, plus edges from $v$ to $\sigma$,
        \item $(u, \parent(u)) \in R \cup R'$, 
        \item for each child $c$ of $u$ such that $(c, u) \notin R \cup R'$, $u$ is the unique (open) entry port of $c$ (defined with respect to the first tuple) and $v \notin T_c$; we call such $c$, a \emph{dangling child} of $u$.
    \end{itemize}

    Furthermore, we define the weight of such an edge to be the discounted weight of newly realized edges (namely, $R' \setminus R$) plus the total value of the optimal solutions for subtrees defined by dangling children of $u$. Note that the $R' \setminus R$ must connect $v$ to other nodes in $\sigma$, so we have all the information needed to discount their weight.

    Finally, we connect $s$ to each tuple $(z, \sigma, R)$ where $R$ is the set of edges with both endpoints in $\sigma$ and the weight of the edge is the discounted (w.r.t. $\sigma$) weight of $R$; and we connect each tuple $(z, \sigma, R)$ to $t$ if the only reachable nodes adjacent to $\sigma$ are dangling children and we set the weight of the edge to be the total value of the subproblems defined by those dangling children.
    
    Given a path $P$ in $H$ we define $\tau$ to be the induced solution by taking the $\sigma$ of the first tuple in the path, and then extending the ordering by appending the new node of the $\sigma$ in the next tuple and so on. Similarly, we can define the inverse operation: Given a sequencing $\tau$ realizing a connected component of nodes that have the root of the tree, then we can define a sequence of tuples such that the sequence of tuples induces $\tau$.

    \begin{claim} 
        \label{claim:realized}
        Let $P$ be a sequence a path out of $s$ in $H$ inducing some ordering $\tau$. Then $\tau$ realizes exactly the union of all the $R$-sets in $P$.
    \end{claim}

    The claim is easy to prove by induction on the length of the sequence. If the sequence has only one tuple $(z, \sigma, R)$, then $\tau = \sigma$ and  $R$ is the set of edges realized by $\sigma$, so the claim follows. Otherwise, if the last two tuples are $(z, \sigma, R)$ and $(z, \sigma', R')$ and $v$ is the last node in $\tau$ then $R' \setminus R$ is the set of edges realized  by $\tau$ incident on $v$ and we can use induction to account for the rest.

    In order to prove the correctness of our dynamic programming formulation, we need to argue that every solution $\tau$ to the original problem induces a path a equivalent cost, and vice-verse.
    
    \begin{claim}
        Let $\tau$ be the sequence of nodes in the connected component $C$ of edges realized by the optimal solution $\opt$ having $z$. The sequence of tuples induced by $\tau$ forms a valid $s$-$t$ path whose weight equals 
        \[ \sum_{(u, v) \in T[C]} f(|d_u - d_v|) w(u,v) + \sum_{\mathclap{\substack{u \notin C \\ \parent(u) \in C}}} \opt[T_u], \]
        where $d_u$ is the position of $u$ in $\tau$.
    \end{claim}

    If the sequence is a path, then by Claim~\ref{claim:realized}, $\tau$ realizes precisely the union of the $R$-sets in the sequence, and the weight of the path is precisely as stated in the claim. It only remains to show that the sequence is indeed a path. Consider two consecutive tuples $(z, \sigma, R)$ and $(z, \sigma', R')$ along the sequence. Our goal is to show that there is an edge connecting them. The first two conditions of a valid edge definition hold by definition of the induced sequence of tuples. For the third condition, note that $(u, \parent(u))$ must be realized by $\tau$ and so $\parent(u)$ must occur within $k$ positions of $u$ so the edge must appear in $R \cup R'$ and the condition holds. For the fourth condition, if we let $c$ be a child of $u$ such that $(c, u) \notin R \cup R'$, we note that $\tau$ cannot realize this edge after $\sigma'$, so it must be the case that $v \notin T[c]$ (otherwise $v$ would be disconnected from the root in $C$) and that $c$ is dangling child of $u$ (otherwise $c$ has a descendant in $\sigma$ that would be disconnected from the root in $C$).

    % By Claim~\ref{claim:realized}, $\tau$ realizes precisely  the union of the $R$-sets in the sequence. Now consider two consecutive tuples $(z, \sigma, R)$ and $(z, \sigma', R')$ along the sequence. Our goal is to show that there is an edge connecting them. The first two conditions of a valid edge definition hold by definition of the induced sequence of tuples. For the third condition, if we let $u$ be the vertex leaving $\sigma$ and $v$ the vertex joining $\sigma'$, then $(u, \parent(u))$ must be realized by $\tau$ so it must be present in $R \cup R'$ as otherwise the edge cannot be realized by $\tau$. For the fourth condition, if we let $c$ be a child of $u$ such that $(c, u) \notin R \cup R'$, we note that $\tau$ does not realize this edge so it must be the case that $v \notin T[c]$ (otherwise $v$ would be disconnected from the root in $C$) and that $c$ is dangling child of $u$ (otherwise $c$ has a descendant in $\sigma$ that would be disconnected from the root in $C$).

    \begin{claim}
        For a given $s$-$t$ path in $H$, let $\tau$ be the ordering induced by the path. Then the set of edges realized by $\tau$ forms a connected component $C$ that contains the root and the weight of the path equals
            \[ \sum_{(u, v) \in T[C]} f(|d_u - d_v|) w(u,v) + \sum_{\mathclap{\substack{u \notin C \\ \parent(u) \in C}}} \opt[T_u], \]
        where $d_u$ is the position of $u$ in $\tau$.
    \end{claim}

    By Claim~\ref{claim:realized}, $\tau$ realizes precisely the union of the $R$-sets in the sequence. For every $v \in \tau$ other than $z$, we argue that $(v, \parent(v))$ is realized by $\tau$. Indeed, let $(z, \sigma, R)$ be the last tuple that such that $v \in \sigma$. If $(z, \sigma, R)$ is not the last tuple, by the third existence condition on the edge to the next tuple guarantees that $(u, \parent(u))$ is realized. If $(z, \sigma, R)$ is the last tuple, by the existence condition on the edge to $t$, all reachable nodes adjacent to $\sigma$ are dangling, in particular $\parent(u)$ is not reachable. Therefore, since $(v, \parent(v))$ is realized for all $v$, using induction we get that $v$ must be connected all the way to the root with realized edges. Therefore the vertices in $\tau$ form a connected subtree containing the root $z$, and the set of realized edges is precisely this subtree.
    
    All this effort would be for naught, unless we could represent $H$ succinctly. Recall that every node in $H$ is a tuple $(z, \sigma, R)$; clearly, there are only $n$ choices for $z$ and only $n^k$ choices for $\sigma$; furthermore, for an edge to be in $R$, since $\sigma$ is a contiguous chunk of size $k$, they can only realize edges with connection to the previous $k$ nodes, thus, we can represent $R$ succinctly by listing those additional $k$ nodes. Overall, there are $n^{2k+1}$ edges in $H$; we can list the outgoing neighboring tuples in $O(n)$ time per tuple\footnote{We do not attempt to optimize this running time.}. Therefore, we can run Dijkstra in $O(n^{2k + 2})$ time and identify the connected component of $z$. Since this has to be done for every node in $T$, we gain an extra factor of $n$ for a running time of $O(n^{2k + 3})$.
\end{proof}

% We finish this section by noting that the problem becomes much easier to solve on special graphs classes such as stars. This result will be useful in designing a constant factor approximation for special graph classes in the next section.

% \begin{lemma}
%     \label{lm:exact-star}
%     \prob{Ext-TSP} can be solved optimally in linear time when $G$ is a star.
% \end{lemma}

% \begin{proof}
%     Let $u$ be the center of the star. The value of a solution consists of the weight of the edges connecting $v$ and the $k$ vertices that come right before or right after $v$. Thus, the optimal solution can be computed easily in $O(n)$ time using a linear time algorithm for selection.
% \end{proof}

\section{Approximation Algorithms for special graph classes}

In this section, we shows that we can get very good approximations for special graph classes that go beyond trees.

\begin{theorem}
    \label{thm:tree-width}
    There is an $n^{O(\frac{k t}{\epsilon})}$ time $(1 + \epsilon)$-approximation for \prob{Ext-TSP} in graphs with a tree decomposition of tree-width $t$.
\end{theorem}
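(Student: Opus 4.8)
The plan is to reduce \prob{Ext-TSP} to a clustering problem that discards long-range interactions, and then to solve that clustering problem exactly by dynamic programming over the tree decomposition. Set $W=\ceil{k/\epsilon}$. I would partition the positions $\{1,\dots,n\}$ into consecutive blocks of length $W$ beginning at a random offset $r\in\{0,\dots,W-1\}$ and discard every edge whose two endpoints land in different blocks. An edge realized at distance $d\le k$ in a fixed layout is separated by a random offset with probability at most $d/W\le k/W\le\epsilon$, so for some fixed offset the discarded weight is at most $\epsilon\cdot\opt$, and since relative orders inside a block are unchanged the surviving value is at least $(1-\epsilon)\opt$. Once cross-block edges are dropped the block boundaries play no further role: the objective depends only on the induced partition of $V$ into groups of size at most $W$ with an internal order per group, counting an edge iff both endpoints share a group. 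Hence it suffices to compute an optimal such clustering; concatenating its groups in any order yields a layout of value at least the clustering value (within-group distances are preserved and cross-group edges can only add value), so an optimal clustering gives a $(1+\epsilon)$-approximation.

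The next step is a normalization that makes the tree decomposition bite. I claim we may assume every group induces a \emph{connected} subgraph of $G$: a group contributes nothing across its connected components, so splitting a disconnected group into its components preserves the objective and only relaxes the size bound. With this in hand, since every bag $X$ of the decomposition is a separator, any connected group containing both a processed and an unprocessed vertex must meet $X$; therefore at most $|X|\le t+1$ groups are \emph{active} (straddle the current frontier) at any bag.

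I would then run a bottom-up dynamic program on a nice tree decomposition. A state at bag $X$ records a partition of $X$ into at most $t+1$ active groups, the slot in $\{1,\dots,W\}$ occupied by each bag vertex inside its group, and, for each active group, its \emph{occupancy pattern} --- the set of slots already used. Introducing a vertex $v$ amounts to choosing its group (a fresh group, or an active one containing a neighbour of $v$, which keeps groups connected) and an empty slot, and charging $f(|\text{slot}(v)-\text{slot}(u)|)\,w(u,v)$ for every same-group neighbour $u$ of $v$. Crucially, such a $u$ must still lie in $X$: a forgotten vertex has all incident edges already accounted for, so no future edge ever needs the identity or slot of a forgotten group member, and the occupancy pattern alone suffices to avoid slot collisions and to enforce the size-$W$ bound. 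A group is finalized when its last representative is forgotten. At a join node the occupancy patterns of each common active group coming from the two children must be merged --- they must agree on the slots of $X$-vertices and be disjoint elsewhere --- and no within-group edge is lost there, because vertices private to the two subtrees are non-adjacent.

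For the running time, the number of states per bag is at most the number of partitions of $X$ times $W^{t+1}$ slot assignments times $(2^{W})^{t+1}$ occupancy patterns, i.e.\ $2^{\Oh(Wt)}=2^{\Oh(kt/\epsilon)}$; transitions and joins cost no more, so the whole computation runs in $n^{\Oh(kt/\epsilon)}$ time as required (in fact $2^{\Oh(kt/\epsilon)}\cdot\mathrm{poly}(n)$). I expect the delicate part to be precisely the bookkeeping of the third step: arguing that the compressed state --- occupancy patterns plus the slots of the current bag vertices, with forgotten members dropped entirely --- lets one charge every within-group edge exactly once, maintain connectivity through the attachment rule, and respect the size bound across join nodes, where a single group may be assembled from several subtrees. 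By comparison the shifting reduction is routine once the connectivity normalization is in place.
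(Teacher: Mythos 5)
Your first half (random-offset blocking, reduction to an ordered clustering problem with groups of size at most $W=\ceil{k/\epsilon}$, and the connectivity normalization that lets bags act as separators) is sound and matches the paper's reduction to its ``auxiliary problem.'' Your DP, however, is genuinely different from the paper's --- the paper brute-forces the issue by making each state name, for every bag vertex $u$, the \emph{entire ordered cluster} $\sigma_u$ containing $u$ (hence $n^{O(kt/\epsilon)}$ states and no incremental-construction rules at all), whereas you compress the state to bag-local information (partition, slots, occupancy patterns), aiming at $2^{\Oh(kt/\epsilon)}\cdot\mathrm{poly}(n)$ --- and it is exactly in this compression that there is a genuine gap. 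Your introduce rule, ``a fresh group, or an active one containing a neighbour of $v$,'' cannot realize every connected group on a fixed nice decomposition. Two problems: (i) there is no merge operation at introduce nodes, so a vertex adjacent to members of \emph{two} distinct active groups cannot fuse them into one; (ii) a connected group's processed portion may legitimately consist of several components, each meeting the bag at different vertices, to be joined only by a vertex introduced later. Concretely, take the group $\{a,b,c\}$ with edges $(a,c)$ and $(b,c)$, and a nice decomposition that introduces $a$, then $b$, then $c$ (bags $\{a\}$, $\{a,b\}$, $\{a,b,c\}$ --- perfectly valid). When $b$ arrives it has no processed neighbour, so under your rule it must start a fresh group; when $c$ arrives it may attach to only one of the two groups, and the edge to the other group's member is irrecoverably lost. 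Since the shifted-optimal clustering can contain such groups, your DP need not match its value, and the $(1+\epsilon)$ guarantee breaks. The fix is standard but must be stated: allow an introduced vertex to simultaneously merge \emph{all} active groups containing one of its neighbours (checking that their slot sets are pairwise disjoint and the merged size is at most $W$), so that temporarily split components of a group live in the state as separate groups until a common neighbour or a join node unifies them; with this amendment your charging scheme and the finalize-at-last-forget argument go through, and the state count is unaffected.

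A secondary slip: at join nodes you assert ``no within-group edge is lost,'' but the converse hazard is double counting. A vertex present in the join bag is introduced separately in both children's subtrees, so an edge with \emph{both} endpoints in the join bag is charged once in each branch; you must subtract the discounted contribution of bag-internal edges when combining the two children (the paper does this explicitly in its join-node case). Neither issue affects your running-time claim, which, once the DP is repaired, would in fact be stronger than the paper's bound and is relevant to the paper's second open problem.
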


\begin{proof}
    Let $T$ be the tree decomposition of our input graph $G$ and let $h = \ceil{1 / \epsilon}$. To simplify the presentation of our algorithm we define an auxiliary problem, where the goal is to partition the vertex set into clusters of size at most $hk$ and order each part separately, the \prob{Ext-TSP} objective is computed for each part and summed up. If we let \opt be the value of the optimal solution for the original problem, we claim that \opt', the value of the optimal solution for the auxiliary problem is not much lower; more precisely, 
    \[ \opt' \geq \frac{h-1}{h} \opt. \]
    To see this, suppose that \opt lists the vertices in the order $v_1, v_2, \ldots, v_n$. We pick a random threshold $\alpha$ u.a.r. from $\{0, 1, \ldots, k-1\}$, and cluster vertices together so that for each $j$ we have a cluster $\{v_{hk j + 1 + \alpha}, \ldots, v_{hk (j+1) + \alpha} \}$, yielding a solution to the auxiliary problem. Note that the probability of an edge that is realized by $\opt$ must have endpoints that are at most $k$ apart in the ordering, so there is only a $1/h$ chance of that edge not being present in $\opt'$. Although this is a randomized construction, and it just shows that $E[\opt'] \geq \frac{h-1}{h} \opt$, it is easy to see that there must exist a value of $\alpha$ that yields the desired bound\footnote{Note that the argument is non-constructive in the sense that given $G$ it is not clear how to partition $G$ into clusters of size $hk$ so that $\opt' \geq \frac{h-1}{h} \opt$. The argument only guaranteed the existence of such a clustering.}.

    Given a tree decomposition for $G$ with treewidth $t$, and a bag $B$ in the decomposition we denote with $T[B]$ the subset of vertices in the original graph spanned by the sub-decomposition rooted at $B$. 
    For each $u \in B$ we define a collection orderings of subsets $\mathcal{S}_u$, such that for an ordering $\sigma$ of a subset $S \subseteq V$ of vertices to be in $\mathcal{S}_u$ we require that:
    \begin{itemize}
        \item $|S| \leq hk$,
        \item $u \in S$, and
        \item the subgraph $\big(S, \{ (a, b) \in E[S] : |\sigma(a) - \sigma(b)| \leq k \}\big)$ is connected.
    \end{itemize}
         
    We define a dynamic programming formulation for our auxiliary problem as follows. For each bag $B$ in the decomposition and each $|S|$-tuple $(\sigma_u : u \in B)$ where $\sigma_u \in \mathcal{S}_u$, we create a dynamic programming state $A[B, (\sigma_u : u \in B)]$ that corresponds to the cost of the best solution for $T[B]$ where each $\sigma_u$ is the ordering of one of the clusters in the solution of the auxiliary problem. To keep the requirements feasible we ask that for any $u, v \in B$ if $\sigma_u$ and $\sigma_v$ have one or more vertice in common then $\sigma_v = \sigma_u$.

    We work with a nice tree decomposition with join, forget, and introduce nodes. To define the recurrence for $A$ we consider each case. 
    \begin{itemize}
        \item Join node: Here we have children with the same bag as the node. We simply pass the tuple constraining the solution space to each child. To compute its value we add the value of the two children and subtract the contribution of edges inside of $B$ to avoid double counting. Notice that the distance between the endpoints of $E[B]$ is specified by $(\sigma_u : u \in B)$ so we can compute the appropriate discount of these edges.
        \item Introduce node: Here we have a single child with a bag having one fewer element; call it $u$. We remove $\sigma_u$ from the tuple and $u$ from $B$. To compute its value we add the contribution of edges between $u$ and other nodes in $\sigma_u$ to the value of the child. Again,  we can use $\sigma_u$ to discount the weight of these edges accordingly. 
        \item Forget node: Here we have a single child with a bag with one additional element, call it $u$. To compute its value we need to guess the $\sigma_u$ in the optimal solution. If $u$ happens to already be in the part of some other $\sigma_v$ of $v \in B$ then $\sigma_u = \sigma_v$. Otherwise, we must guess $\sigma_v$ by picking $hk$ vertices from $T[B] \setminus \cup_{v \in B} \sigma_v$ and checking that $\sigma_v \in \mathcal{S}_u$. Taking the best value state over all possibilities yields the value of the parent state.
    \end{itemize}
    
    For the correctness, notice that there is no loss of information in the case of a introduce node. Let $u$ be the node begin introduced. Either $u$ is the only vertex in common between $B$ and $\sigma_u$, in which case $u$ is the only vertex in $T[B]$ by virtue of $\sigma_u$ being connected in $G$, and so it is safe to forget $\sigma_u$ together with $u$ in the child node. Or, there exists another $v \in B - u$ such that $v \in \sigma_u$, which case $\sigma_v = \sigma_u$ and so the information about the constraints we imposed in $u$'s part are preserved further down the decomposition.

    For the correctness of the forget node case, note that the component that $u$ belong to in the optimal solution is connected and that $B$ acts like a separator from $T[B]$ to the rest of the graph, so if $u$ is not in the same component as any node in $B$, then it must be in a component with only nodes in $T[B]  \setminus \cup_{v \in B} \sigma_v$.
      
    There are $n^{hkt + 1}$ states in the decomposition and each one is considered once by a state associated with the parent bag in the decomposition, so the overall work is linear on the number of the states. We can enumerate the states on the fly by paying another $O(n)$ term per state so the total running time is $n^{hkt + 2}$.
    
    Now, setting $h = 1 + 1/\epsilon$, the optimal solution found by DP is bound to be a $1 + \epsilon$ approximation for the original problem in $n^{O(kt/\epsilon)}$ as promised in the Theorem statement.
\end{proof}

We can use this result to obtain a $(1+\epsilon)$-approximation for planar graphs.

\begin{corollary}
    There is an $n^{O(\frac{k}{\epsilon^2})}$ time $(1 + \epsilon)$-approximation for \prob{Ext-TSP} in planar graphs.
\end{corollary}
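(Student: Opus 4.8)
The plan is to combine Theorem~\ref{thm:tree-width} with Baker's layering technique, exploiting the fact that a planar graph decomposes into bounded-treewidth ``bands'' along the layers of a breadth-first search. First I would root a BFS at an arbitrary vertex of the planar input $G$ and assign to each vertex $v$ its layer index $L(v)$, the BFS distance from the root. The crucial property is that every edge $(u,v) \in E$ satisfies $|L(u) - L(v)| \le 1$; in particular, an edge can only join vertices in the same or in consecutive layers, so it straddles at most one layer boundary.

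Next, fix a period $p = \lceil 1/\epsilon \rceil$ and, for a shift $s \in \{0, \dots, p-1\}$, delete every edge whose endpoints straddle a layer boundary at a layer $\equiv s \pmod p$. This splits $G$ into disjoint \emph{bands}, each spanning at most $p$ consecutive BFS layers. A band spanning $p$ consecutive layers is $p$-outerplanar and hence has treewidth $O(p) = O(1/\epsilon)$, with a tree decomposition of this width computable in polynomial time. I would then run the algorithm of Theorem~\ref{thm:tree-width} on each band with $t = O(1/\epsilon)$, obtaining a $(1+\epsilon)$-approximate ordering of each band in time $n^{O(kt/\epsilon)} = n^{O(k/\epsilon^2)}$, and concatenate these per-band orderings into a single global ordering of $V$.

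The analysis proceeds by a standard averaging argument. Restricting the optimal global ordering $\opt$ to the vertices of a single band yields a feasible ordering for that band, and since positions within a band can only move closer together under restriction while $f$ is non-increasing, every edge realized by $\opt$ with both endpoints in the same band is still realized, with at least the same discounted weight. Hence $\sum_{\text{bands } b} \opt[b] \ge \opt - (\text{discounted weight of realized edges cut by the shift})$, where the cut realized edges are exactly those of $\opt$ joining two different bands. Because each edge straddles at most one boundary, a uniformly random shift $s$ cuts any fixed edge with probability at most $1/p \le \epsilon$; therefore some shift cuts realized edges of total discounted weight at most $\epsilon \cdot \opt$. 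For that shift, concatenating the band orderings preserves all within-band realized edges (each band occupies a contiguous block of the global order) and loses nothing else, so the produced solution has value at least $\frac{1}{1+\epsilon}\sum_b \opt[b] \ge \frac{1-\epsilon}{1+\epsilon}\opt$. Since there are only $p$ shifts, I would simply try all of them and keep the best solution, which makes the argument constructive; rescaling $\epsilon$ gives the claimed $(1+\epsilon)$-approximation, and the factor $p$ together with the number of bands contributes only polynomial overhead.

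The main obstacle is the interplay between the two approximation sources. Baker's step must lose only an $\epsilon$ fraction of \emph{$\opt$'s value} rather than of the total edge weight; this is exactly what the BFS layer-span-one property buys us, since it lets us charge the loss against the realized edges of $\opt$ and bound each such edge's cut probability by $1/p$. Care is also needed to verify that concatenating independently-computed band orderings destroys no within-band contribution (it does not, because each band is contiguous in the global order) and that the treewidth bound $O(p)$ for a band of $p$ layers feeds into Theorem~\ref{thm:tree-width} so that the exponent is $O(k/\epsilon^2)$ rather than something larger.
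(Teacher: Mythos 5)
Your proposal is correct and follows essentially the same route as the paper: apply Baker's layering to reduce planar \prob{Ext-TSP} to the bounded-treewidth case, invoke Theorem~\ref{thm:tree-width} with $t = O(1/\epsilon)$, and charge the layering loss against the realized edges of \opt via averaging over shifts, yielding $n^{O(k/\epsilon^2)}$ time. The only cosmetic difference is that you unroll Baker's technique explicitly (trying all $p$ shifts, running the DP on each band separately, and concatenating the band orderings), whereas the paper cites Baker as a black box and runs the DP once on the resulting $\ell$-outerplanar subgraph $G'$ of treewidth at most $3\ell$.
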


\begin{proof}
    Using Baker's technique \cite{Baker} we can find an $\ell$-outerplanar subgraph $G'$ of the input graph $G$ such that value of the optimal solution to the \prob{Ext-TSP} in $G'$ is at least $1- 2/\ell$ the value of the optimal solution in $G$. Since the treewidth of $G'$ is no more than $3\ell$, we can use the algorithm from Theorem~\ref{thm:tree-width} get a $1+\epsilon'$ approximation in $G'$ in $n^{O(k\ell/\epsilon')}$ time. Setting $\epsilon' = \epsilon/3$ and $\ell = 6/\epsilon$, we get the desired result for any $\epsilon \leq 1$.
\end{proof}

\section{Approximation Algorithms for general graph}

\subsection{Greedy}

% A natural algorithm for the problem is a greedy one: Start with a vertex, and on each step append a vertex that maximally increases the objective. 
Consider the following greedy algorithm: Start with an arbitrary vertex, and on each step append a vertex with the heaviest edge to the last-added vertex; i.e. if $u$ is the last-added vertex, then we append the vertex $u$ maximizing $w(u,v)$.

\begin{lemma}
    Greedy is a $2k$-approximation and this is tight. It can be implemented to run in $O(m \log n)$ time.
\end{lemma}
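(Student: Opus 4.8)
The proof has three independent pieces: the $2k$ upper bound on the approximation ratio, a matching lower-bound family, and the implementation.

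For the \textbf{ratio}, let greedy output the path $p_1, \dots, p_n$ and set $g_a = w(p_a, p_{a+1})$. The plan rests on two facts. First, since $f(1)=1$, the value of the greedy solution is at least $\sum_{a=1}^{n-1} g_a$, the weight of the path it builds. Second (the defining property of greedy): at the moment greedy extended from $p_a$, every $p_b$ with $b>a$ was still unplaced, so $w(p_a,p_b)\le g_a$ for all $b>a$. On the optimum side, because $f\le f(1)=1$, the value $\opt$ is at most the total weight of the edges it \emph{realizes}; and in any ordering each vertex has at most $2k$ other vertices within distance $k$, so the realized edges of $\opt$ induce a subgraph of maximum degree at most $2k$. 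I would then charge each realized edge $(p_a,p_b)$ of $\opt$, with $a<b$ in the greedy order, to the greedy edge $g_a$: its weight is at most $g_a$ by the second fact, and by the degree bound at most $2k$ edges are charged to any single $g_a$. Summing yields
\[ \opt \;\le\; \sum_{\text{realized by }\opt} w(e) \;\le\; 2k\sum_{a} g_a \;\le\; 2k\cdot(\text{greedy value}). \]

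For \textbf{tightness} I would use the threshold discount $f(i)=1$ for $i\le k$ (a legal non-increasing $f$) and the following family. Take a backbone of $m$ \emph{center} vertices $c_1,\dots,c_m$ joined by weight-$1$ edges $(c_i,c_{i+1})$, and hang on each $c_i$ a set of $2k$ degree-one \emph{side} vertices via edges of weight $1-\delta$. Started at $c_1$, greedy always prefers the weight-$1$ backbone over the weight-$(1-\delta)$ side edges, so it first lays out $c_1,\dots,c_m$ and only afterwards the side vertices; since every side vertex then sits far (distance $>k$) from its center in the greedy order, greedy realizes essentially only the $m-1$ backbone edges. The optimum instead places each $c_i$ with $k$ of its side vertices immediately before it and $k$ immediately after, realizing all $2k$ incident side edges, for a value of about $2k(1-\delta)m$. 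Letting $m\to\infty$ and $\delta\to 0$ drives $\opt/(\text{greedy})$ to $2k$.

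For the \textbf{running time}, I would keep the incident edges of every vertex in a max-heap (equivalently, pre-sort the adjacency lists in $\Oh(m\log n)$ total time); when greedy is located at $u$ it pops candidate edges until it finds one whose other endpoint is still unplaced, appending an arbitrary leftover vertex by a weight-$0$ step if none exists. Each edge is discarded from each of its two endpoints' structures at most once, so the total number of heap operations is $\Oh(m)$, each costing $\Oh(\log n)$, for an $\Oh(m\log n)$ bound. The routine parts are the charging inequality and the heap bookkeeping; the \emph{main obstacle} is the tightness family, where one must verify that greedy cannot opportunistically realize more than an $\Oh(k^2)$ boundary contribution of side edges near $c_m$, and that the optimum cannot beat the symmetric side arrangement, so that the ratio is genuinely $2k$ rather than the $k$ one would get from unit weights.
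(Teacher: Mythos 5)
Your proposal is correct and matches the paper's proof in all essentials: the same charging argument (each \opt-realized edge charged to the greedy edge at its earlier endpoint in greedy order, with the $2k$ bound coming from the maximum degree of the realized subgraph), the same star-plus-path tight instance, and an equivalent $\Oh(m\log n)$ implementation (per-vertex lazy heaps instead of the paper's single priority queue with $\Oh(m)$ updates). Your two cosmetic refinements are fine but unnecessary: the $(1-\delta)$ perturbation merely avoids the adversarial tie-breaking the paper tolerates with ``greedy may,'' and the boundary contribution you worried about is in fact at most $k$ side edges (only positions $m+1,\dots,m+k$ can realize an edge to a placed center), not $\Oh(k^2)$.
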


\begin{proof}
    %Let $O$ be the edges realized by the optimal solution and let $u_1, u_2, \ldots, u_n$ be the order computed by the greedy algorithm. Let $S$ be the edges realized by the greedy algorithm; that is, $S = \{ (u_i, u_j) \in E : |i - j | \leq k \}$. We build a partition of $O$ as follows, for each $u_i$ we have a part $O_i = \{ (u_i, u_j) \in O : j > i \}$. Note that $w(e) \leq w(u_i, u_{i+1})$ for all $e \in O_i$ and $|O_i| \leq 2k$. Thus,
    %\[ w(O) = \sum_{i=1}^{n-1} w(O_i) \geq 2k \sum_{i=1}^{n-1} w(u_i, u_{i+1}) \]
    
    Let $O$ be the edges realized by the optimal solution and let $u_1, u_2, \ldots, u_n$ be the order computed by the greedy algorithm. Let $d^*_u$ be the position of $u$ in the optimal solution. Observe that the value of the greedy solution is at least $\sum_{i=1}^{n-1} f(1)w(u_i,u_{i+1}) = \sum_{i=1}^{n-1} w(u_i, u_{i+1})$ as $f(1) = 1$. We partition $O$ as follows, for each $u_i$ we have a part $O_i = \{ (u_i, u_j) \in O : j > i \}$. Using the fact that $f$ is non-increasing and the definition of the greedy algorithm, $f(|d^*_u - d^*_v|)w(u,v) \leq w(u,v) \leq w(u_i, u_{i+1})$ for all $(u,v) \in O_i$, and $|O_i| \leq 2k$. Thus, the value of the optimal solution is
    \[ \sum_{i=1}^{n-1} \sum_{(u,v) \in O_i} f(|d_u - d_v|) w(u,v) \leq 2k \sum_{i=1}^{n-1} w(u_i, u_{i+1}). \]
    Thus, greedy is a $2k$-approximation. 

    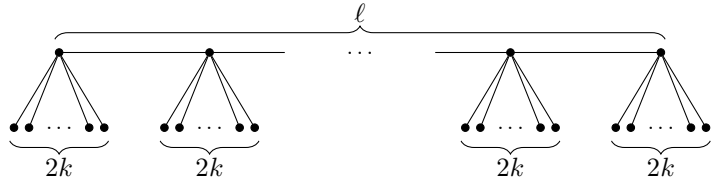
\begin{figure}
        \centering
        \begin{tikzpicture}[vertex/.style={circle,draw=black,fill=black,inner sep=1pt},xscale=2]
            \foreach \x in {0, 1, 3, 4} {
                \draw (\x, 0) node[vertex] (r\x) {};
                \draw (r\x) -- ++(-0.3, -1) node[vertex] (left) {};
                \draw (r\x) -- ++(-0.2, -1) node[vertex] {};
                \draw (r\x) -- ++(0.2, -1) node[vertex] {};
                \draw (r\x) -- ++(0.3, -1) node[vertex] (right) {};
                \draw (r\x) + (0,-1) node {\footnotesize $\ldots$};

                % curly bracket indicated there are 2k nodes
                \draw [
                    decorate,
                    decoration={mirror,brace,amplitude=5pt,raise=5pt}
                ] (left.west) -- (right.east)
                  node [black,midway,yshift=-15pt] {$2k$};
            }
            \draw(0, 0) -- +(1, 0);
            \draw(1, 0) -- +(0.5, 0) node {};
            \draw(3, 0) -- +(1, 0);
            \draw(3, 0) -- +(-0.5, 0) node {};
            \draw(2, 0) node {\footnotesize $\ldots$};

            % curly bracket indicated there are ell nodes
            \draw [
                decorate,
                decoration={brace,amplitude=5pt,raise=5pt}
            ] (r0.west) -- (r4.east)
                node [black,midway,yshift=15pt] { $\ell$};

        \end{tikzpicture}

        \caption{Tight instance for greedy. Optimal solution can realize $2k\ell$ edges while Greedy may end up realizing only $\ell -1 + k$ edges.}
    \end{figure}

    To show that the analysis is tight, consider the following instance with $n= (2k+1) \ell$ consisting on $\ell$ $2k$-stars with the centers of the stars connected with a path of length of length $\ell-1$. All edges have weight 1. The discount function $f$ is such that $f(i) = 1$ when $i \leq k$ and $f(i) =0$ when $i > k$. The optimal solution sequences one star after the other and achieves a total cost of $2k \ell$. While the greedy solution may start at the center of the "left most star" and traverse the centers of all star and then add $k$ pendant nodes, achieving a total cost fo $\ell - 1 + k$. By making $\ell$ large we get an approximation ratio that tends to $2k$.

%    For the implementation, we need to maintain a maximum priority queue with the nodes that are yet to be added to the greedy solution. The value associated with node $u$ is the weight of the edge connecting $u$ to the last $k$ nodes in the current partial greedy solution. When a new node is added to the greedy solution we gain one new vertex and lose the old $k$th node from the right. This causes the priority of certain vertices to be updated (up for those incident on $u$ or down for those incident on the $k$th node from the right, or either direction if incident on both nodes). The key observation is that each edge can cause the priority of a node to be changed twice (once when the first endpoint is added to the solution and again when that endpoint stops being one of the last $k$ nodes of the greedy solution). Therefore, the total number of priority updates is $O(m)$, which using simple binary heap yields the desired time.

  For the implementation, we need to maintain a maximum priority queue with the nodes that are yet to be added to the greedy solution. The value associated with node $u$ is the weight of the edge connecting $u$ to the last node in the current partial greedy solution. When a new node is added to the greedy solution, this causes the priority of certain vertices to be updated (up for those incident on $u$ or down for those incident on the second last-node of the partial solution, or either direction if incident on both nodes). The key observation is that each edge can cause the priority of a node to be changed twice (once when the first endpoint is added to the solution and again when that endpoint stops being the last node of the greedy solution). Therefore, the total number of priority updates is $O(m)$, which using a simple binary heap yields the desired time.
\end{proof}

\subsection{Cycle cover based algorithm}

We can do slightly better if we use a maximum weight cycle cover as the basis for our solution. A similar approach has been used to design approximation algorithms for max-TSP~\cite{FisherNW79}.

\begin{theorem}
\label{thm:cycle-cover}
    There is a polynomial time $\left(1 + \frac{1}{k+1} \right)k$-approximation for \prob{Ext-TSP} in general graphs.
\end{theorem}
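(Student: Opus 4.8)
The plan is to compute a maximum-weight ``cycle cover'' $\mathcal{C}$ of $G$---by which I mean a maximum-weight subgraph of maximum degree at most two, i.e.\ a vertex-disjoint union of paths and cycles, which can be found in polynomial time by reduction to maximum-weight matching, exactly as in the max-TSP algorithm of Fisher \etal~\cite{FisherNW79}. From $\mathcal{C}$ I would build an ordering by laying out each connected component of $\mathcal{C}$ in a contiguous block and concatenating the blocks in arbitrary order. The two ingredients are then an upper bound $\opt \le F\cdot w(\mathcal{C})$ and a lower bound on the value of the constructed ordering, both expressed through $w(\mathcal{C})$ and $F \coloneqq \sum_{j=1}^{k} f(j)$; the factor $F$ cancels once the two are combined.

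For the upper bound I would regroup the optimal objective by the distance between endpoints. Writing $O^{(j)}$ for the set of edges whose endpoints are exactly $j$ apart in \opt, we have $\opt = \sum_{j=1}^{k} f(j)\, w(O^{(j)})$. Each vertex has at most two neighbours at distance exactly $j$ (the positions $d_v-j$ and $d_v+j$), so every $O^{(j)}$ has maximum degree at most two and hence $w(O^{(j)}) \le w(\mathcal{C})$. Summing over $j$ gives $\opt \le F\cdot w(\mathcal{C})$.

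For the lower bound, each path component of $\mathcal{C}$ contributes its full weight to the objective, since consecutive vertices sit at distance one and $f(1)=1$ (edges not accounted for this way only increase the objective, so this is a valid lower bound). For a cycle $Z$ of length $\ell$ (necessarily $\ell\ge 3$ in the undirected graph) of total weight $W$, I would delete its lightest edge $e$ and lay $Z$ out as a path, so that the $\ell-1$ remaining edges are at distance one and the endpoints of $e$ end up at distance $\ell-1$. Crucially, when $\ell\le k+1$ the edge $e$ is \emph{not} lost: it still contributes $f(\ell-1)\,w(e)$. Using $w(e)\le W/\ell$, the block contributes at least $W\bigl(1-\tfrac{1-f(\ell-1)}{\ell}\bigr)$, and at least $W\tfrac{\ell-1}{\ell}\ge W\tfrac{k+1}{k+2}$ when $\ell\ge k+2$, where $f(\ell-1)=0$.

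The crux is to show that every component contributes at least $\tfrac{(k+1)F}{k(k+2)}\,W$, for then the ordering has value at least $\tfrac{(k+1)F}{k(k+2)}\,w(\mathcal{C})$ and the ratio is $F\cdot w(\mathcal{C})\big/\bigl(\tfrac{(k+1)F}{k(k+2)}w(\mathcal{C})\bigr)=\tfrac{k(k+2)}{k+1}=\bigl(1+\tfrac{1}{k+1}\bigr)k$. This is immediate for paths (as $F\le k$ forces $\tfrac{(k+1)F}{k(k+2)}\le 1$) and for long cycles, so the main obstacle is short cycles whose deleted edge carries a small discount $f(\ell-1)$: there the per-cycle loss can be as large as a $1/\ell\ge 1/(k+1)$ fraction, which on its own is too much. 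The saving grace is that a small $f(\ell-1)$ forces $F$ to be small too: since $f$ is non-increasing, $F\le(\ell-2)+(k-\ell+2)f(\ell-1)$. Substituting this bound, the required inequality $1-\tfrac{1-f(\ell-1)}{\ell}\ge\tfrac{(k+1)F}{k(k+2)}$ becomes affine in the single quantity $x=f(\ell-1)\in[0,1]$, so it suffices to verify it at $x=0$ and $x=1$; both endpoints hold (with equality at $\ell=k+2$, $x=0$), which pins down the ratio. I expect this balancing between the decay of $f$ and the lengths of the short cycles to be the delicate part of the argument.
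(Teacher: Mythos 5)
Your proof is correct, but it follows a genuinely different route from the paper's. Both start identically: compute a maximum-weight degree-$2$ subgraph via matching, lay out each component contiguously, break each cycle of length $\ell$ at its lightest edge $e$ while still crediting $f(\ell-1)w(e)$, giving the per-cycle bound $W\bigl(1-\tfrac{1-f(\ell-1)}{\ell}\bigr)$. The divergence is in how the troublesome short cycles are charged against \opt. The paper uses the coarse upper bound $\opt \le k\,w(A)$ (dropping the discounts), then repairs the slack globally: it picks the worst cycle length $\ell^*$ maximizing $\tfrac{1-f(\ell^*-1)}{\ell^*}$, splits the edges realized by \opt into those at distance at most $\ell^*-2$ and those at distance in $[\ell^*-1,k]$, bounds $w(A)$ against each class separately, and maximizes the resulting ratio of two affine functions of $f(\ell^*-1)$ at its endpoints. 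You instead keep the discounts in the upper bound, $\opt \le F\cdot w(\mathcal{C})$ with $F=\sum_{j=1}^k f(j)$ (valid since each distance class $O^{(j)}$ has maximum degree two), and then prove a purely local claim that \emph{every} component contributes at least $\tfrac{(k+1)F}{k(k+2)}W$, with $F$ cancelling in the final ratio; the link between a short cycle's loss and the global quantity $F$ is your monotonicity bound $F\le(\ell-2)+(k-\ell+2)f(\ell-1)$, after which the claim is affine in $x=f(\ell-1)$ and checked at $x=0,1$. One point you state but do not verify: the $x=0$ endpoint requires $\tfrac{\ell-1}{\ell}\ge\tfrac{(k+1)(\ell-2)}{k(k+2)}$ for \emph{all} $\ell\in[3,k+2]$, not only at the equality case $\ell=k+2$; this does hold, since $q(\ell)=(\ell-1)k(k+2)-(k+1)\ell(\ell-2)$ is concave in $\ell$ with $q(3)=(2k+3)(k-1)\ge0$ and $q(k+2)=0$, so it is a genuine (if one-line) gap to fill rather than an error. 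On balance your argument buys locality and transparency --- each component individually meets the target ratio, and the slightly delicate degree-counting for the far edge class in the paper (where the published indices $k-\ell^*-2$ versus $k-\ell^*+2$ are in fact garbled) is avoided entirely --- while the paper's approach has the feature of never needing the aggregate quantity $F$, exploiting instead the structure of which distances \opt actually uses.
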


\begin{proof}
    Let $A$ be a maximum weight set of edges such that the degree of every node is at most 2. This problem is also known as maximum weight simple $2$-matching and can be reduced to regular maximum weight matching~\cite[Ch.~30]{book/combopt/Schrijver}. Note that $A$ is a collection of paths and cycles in $G$. If there exists a cycle $C$ in $A$, we break $C$ by removing the lightest edge. This gives us a collection of paths $A'$. Sequencing each path gives a solution to the \prob{Ext-TSP} problem with value at least $w(A')$.

    Now, given a solution to the \prob{Ext-TSP} problem with value $\opt$, we claim that we can construct a solution to the degree bounded problem that has value at least $\opt / k$. To see this, note that the weight of the edges whose endpoints are at distance \emph{exactly} $i$ for $i=1, \ldots, k$ is a candidate solution for $A$. It follows then that $w(A) \geq \opt /k$.
    
    This is because the edges that are counted towards the objective in \prob{Ext-TSP} have maximum degree $2k$ and that solution can be scaled down by a factor of $k$ to get a fractional solution to an exact LP formulation of the degree bounded problem. Thus, we get that $w(A) \geq \opt/k$.

    Let $\algo$ be the value of the solution found by our algorithm. Consider a cycle $C$ in $A$ with length $\ell = |C|$. Let $e$ be the edge in $C$ with minimum weight. Therefore, $C$ contributes at least $w(C) - w(e) + f(\ell - 1) w(e)$ to $\algo$. Since $w(e) \leq w(C) / \ell$, we can further simplify the previous expression to
    \[ w(C) \left(1 - \frac{1 - f(\ell -1)}{\ell} \right) \]
    
    Now if each cycle $C$ in $A$ had length at least $\ell > k + 1$, the weight of $C \cap A'$ would be at least $w(C) \left( 1 - \frac{1}{k +2} \right)$. Let $\algo$ be the cost of the solution found by our algorithm. Then 
    \[ \algo \geq w(A') \geq \frac{k+1}{k + 2} w(A) \geq \frac{k+1 }{(k+2) k} \opt, \]
    which matches the approximation factor of $k+1$ promised in the theorem statement. Unfortunately, cycles can be as small as $\ell = 3$, which depending on $f$ could yield a worse approximation factor, so we need a different approach to our analysis.

    Let $\ell^*$ be the number in $[3, 4, \ldots, k+2]$ maximizing $\frac{1 - f(\ell^* -1)}{\ell}$. Using the same reasoning as above, we see that
    \[ \algo \geq w(A) \left(1 - \frac{1 - f(\ell^* -1)}{\ell^*} \right). \]
    The first thing to note is that if $\ell^* = k+2$ then the above analysis yield the desired approximation, so from now one assume $\ell^* < k + 2$ and $\frac{1 - f(\ell^* -1)}{\ell^*} > \frac{1}{k+2}$, or equivalently, that \[1 - \frac{\ell^*}{k+2} > f(\ell^*-1).\]

    Consider the edges realized by the optimal solution and split them into $X$ and $Y$. The first set, $X$, are the edges whose endpoints are at distance at most $\ell^*-2$ from each other; the second set, $Y$, are the edges whose endpoints are at distance between $\ell^*-1$ and $k$. Notice that
    \[ \opt \leq w(X) + f(\ell^* - 1) w(Y), \]
    since all edges in $Y$ are discounted at least $f(\ell^*-1)$, and that
    \[ w(A) \geq \max \left\{ \frac{w(X)}{\ell^*-2}, \frac{w(Y)}{k - \ell^* -2} \right\},\]
    since we can use the same scaling argument on $X$ or $Y$ but using a smaller scaling factor since the vertices in those edges sets have smaller degrees; namely, $2\ell^*-2$ and $2(k - \ell^* - 1)$ respectively. Putting the above two inequalities together we get
    \begin{align*}
        \opt 
        & \leq (\ell^* -2) w(A) + f(\ell^* -1) (k - \ell^* -2) w(A) \\
        & \leq \frac{(\ell^* -2) + f(\ell^* -1)  (k - \ell^* -2) } {\left(1 - \frac{1 - f(\ell^* -1)}{\ell^*} \right)} \algo. 
    \end{align*}

    Think of the above upper bound on the approximation ratio $\opt/ \algo$ as a function of $f(\ell^*-1)$. We want to find the value $ 0 \leq f(\ell^*-1) \leq 1 - \ell^*/(k+2)$ that yields the worst bound on the approximation ratio. The upper bound is the ratio of two linear functions of $f(\ell^*-1)$ and is thus maximized when either $f(\ell^*-1) = 0$ or $f(\ell^*-1) = 1 - \ell^*/(k+2)$. If $f(\ell^*-1) = 0$, the ratio simplifies to $\frac{\ell^* - 2}{1 - 1/\ell^*}$, which in turn is maximized at $\ell^* = k + 2$ and yields a ratio of $k + \frac{k}{k+1}$, as desired. Finally, if $f(\ell^*-1) = 1 - \ell^*/(k+2)$, we again get the same approximation ratio.
\end{proof}

\subsection{Local search algorithm}

So far all the algorithms we have presented in this section have polynomial running times that are independent of $k$. If we are willing to have algorithms that run in $n^{O(k)}$ we can get arbitrarily good approximations.

Our local search algorithm is parameterized by an integer value $\ell \geq k$. The algorithm maintains a solution $\tau$ and performs local search moves where some subset of $\ell$ nodes are taken out of $\tau$ and sequenced optimally and attached to the end of the solution. At each step we perform the best such move and we stop once there is no move that improves the solution.

\begin{lemma}
\label{lem:local}
    A local optimal solution is a $2 + \frac{2}{\ell/k - 1}$ approximation for the \prob{Ext-TSP} in general graphs.
\end{lemma}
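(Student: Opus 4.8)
The plan is to distill local optimality into one inequality and then combine it with a random-shift partition of the optimal ordering. For a vertex set $S$ with $|S| \le \ell$, write $\mathrm{opt}(S)$ for the optimal \prob{Ext-TSP} value of the induced subgraph $G[S]$, and, relative to the local optimum $\tau$ (of value $\algo$), let $\mathrm{in}(S)$ be the $\tau$-value of realized edges with both endpoints in $S$ and $\mathrm{cut}(S)$ the $\tau$-value of realized edges with exactly one endpoint in $S$. First I would analyze the move that pulls $S$ out of $\tau$, sequences it optimally, and appends it at the end. Deleting $S$ only shrinks the gaps between the surviving vertices, so by monotonicity of $f$ the edges inside $V \setminus S$ keep at least their old value; the appended block contributes $\mathrm{opt}(S)$; any edges created at the junction only help. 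Hence the move has value at least $\algo - \mathrm{in}(S) - \mathrm{cut}(S) + \mathrm{opt}(S)$, and local optimality forces this to be $\le \algo$, yielding the master inequality
\[ \mathrm{opt}(S) \le \mathrm{in}(S) + \mathrm{cut}(S). \]

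Next I would partition the optimal ordering $v_1,\dots,v_n$ into contiguous blocks using a random offset $\alpha$ drawn uniformly from $\{0,\dots,\ell-1\}$, cutting after positions $\alpha,\alpha+\ell,\alpha+2\ell,\dots$; every block $B_j$ then has size at most $\ell$. Since a realized edge of \opt spans distance $d \le k \le \ell$, it straddles a cut point with probability $d/\ell \le k/\ell$, so the expected \opt-value carried by cross-block edges is at most $\tfrac{k}{\ell}\opt$. Consequently some offset leaves at least $(1-\tfrac{k}{\ell})\opt$ worth of \opt-value on edges internal to blocks. Because each $B_j$ is a contiguous stretch of \opt, the distances of its internal edges are unchanged when restricted to $B_j$, so $\mathrm{opt}(B_j)$ dominates the \opt-value internal to $B_j$; summing over $j$ gives
\[ \sum_j \mathrm{opt}(B_j) \ge \Big(1 - \tfrac{k}{\ell}\Big)\opt. \]

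Then I would apply the master inequality to each $B_j$ and add the results. The crucial bookkeeping is that an internal $\tau$-edge is counted once across the $\mathrm{in}(B_j)$ terms, whereas a $\tau$-edge crossing a block boundary is counted twice across the $\mathrm{cut}(B_j)$ terms; writing $I$ and $X$ for the total $\tau$-value of internal and crossing edges (so $\algo = I + X$), the right-hand side collapses to $I + 2X = \algo + X \le 2\algo$. Combining with the previous display gives $(1 - k/\ell)\opt \le 2\algo$, i.e.
\[ \opt \le \frac{2\ell}{\ell-k}\,\algo = \Big(2 + \frac{2}{\ell/k - 1}\Big)\algo, \]
which is exactly the claimed ratio.

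The main obstacle — and the step that pins down the factor $2$ — is precisely this double-counting: I must track that cut edges are charged to two blocks while internal edges are charged to one, so that $X$ (not $2X$) is the overage beyond $\algo$. The other point to justify carefully is the monotonicity claim that closing gaps can only increase the value of edges within $V\setminus S$; it is intuitive but should be argued cleanly from $f$ being non-increasing. I would also emphasize that the random shift is only an averaging device used to bound against the unknown \opt, so the argument needs no constructive way of locating a good offset.
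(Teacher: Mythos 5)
Your proof is correct and arrives at exactly the paper's bound, but the averaging device differs from the paper's. The engine is the same in both: your master inequality $\mathrm{opt}(S) \le \mathrm{in}(S) + \mathrm{cut}(S)$ is precisely the paper's weakened local-optimality condition $w(\sigma) \le w_\sigma(\tau)$, derived the same way (deleting $S$ shrinks gaps, so by monotonicity of $f$ the surviving edges lose nothing, and junction edges only help). Where you genuinely diverge is the charging step. The paper sums this inequality over all $n+\ell$ \emph{overlapping} windows of length $\ell$ slid across \opt and bounds per-edge multiplicities: each $\tau$-edge appears in at most $2\ell$ left-hand terms (each endpoint lies in at most $\ell$ windows) and each \opt-edge in at least $\ell-k$ right-hand terms, giving $2\ell\, w(\tau) \ge (\ell-k)\, w(\opt)$ in one stroke. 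You instead use a \emph{disjoint} random-offset partition of \opt into contiguous blocks, pay an expected $k/\ell$ fraction of \opt for edges straddling cuts, and extract the factor $2$ from cut $\tau$-edges being charged to two blocks, so the right-hand side collapses to $I + 2X \le 2\algo$. The two devices are interchangeable here and yield the identical ratio $2\ell/(\ell-k) = 2 + 2/(\ell/k - 1)$; yours localizes the source of the factor $2$ more transparently (your intermediate bound $\algo + X$ is even momentarily sharper before relaxing to $2\algo$), while the paper's sliding windows avoid having to distinguish internal from crossing edges and need no probabilistic language. One minor looseness you share with the paper: the local move is defined for exactly $\ell$ nodes, yet your first and last blocks (like the paper's boundary windows) may contain fewer than $\ell$ elements, so the master inequality is also invoked for $|S| < \ell$; this is harmless if one defines moves as removing \emph{at most} $\ell$ nodes, but it deserves the one-line remark in both write-ups.
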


\begin{proof}
    We will use the following notation throughout this proof: For a given solution $\tau$ and a permutation $\sigma$ of $\ell$ elements, let $\tau - \sigma$ the permutation of $n-k$ elements that we get by removing the nodes in $\sigma$ from $\tau$. Also, let $\tau | \sigma$ be the permutation obtained by concatenating $\sigma$ to $\tau - \sigma$. Finally, let $w_\sigma (\tau)$ be the discounted weight of edges realized by $\tau$ that are incident on vertices in $\sigma$, and $w(\tau)$ be the discounted weight of all edges realized by $\tau$, i.e. the value of $\tau$.

    Assume that $\tau$ is locally optimal; namely, that no local move can improve its value:
    \[ w(\tau) \geq w(\tau | \sigma) \quad \forall \sigma : |\sigma| = \ell. \]

    Notice that $w(\tau) \leq w(\tau - \sigma) + w_\sigma(\tau)$ and that $w(\tau | \sigma) \geq w(\tau- \sigma) + w(\sigma)$. Therefore, a weaker necessary condition for being locally optimal is that
    \[ w_\sigma(\tau) \geq w(\sigma) \quad \forall \sigma : |\sigma| = \ell. \]

    Let us build a a collection for $n + \ell$ sub-sequences of the optimal solution by sliding a window of size $\ell$ over $\opt$. Call the resulting collection $S$. Adding up the above inequality for all $\sigma \in S$ we get
        \[ \sum_{\sigma \in S} w_\sigma(\tau) \geq \sum_{\sigma \in S} w(\sigma)  \]

    Notice that every edge realized by $\tau$ can appear in at most $2\ell$ terms in the left-hand side of the above inequality (this is because every endpoint appears in at most $\ell$ permutations), while every edge realized by $\opt$ must appear in at least $\ell-k$ terms in the right-hand side of the above inequality. These observation imply the following relation between $\tau$ and $\opt$
    \[ 2\ell w(\tau) \geq (\ell-k) w(\opt), \]
    which in turn finish off the proof of the lemma.
\end{proof}

Of course, the issue with the above algorithm is that it is not clear how to compute a locally optimal solution. However, we can use the usual trick of only making a move if it improves the value of the objective by at least $\delta/n w(\opt)$. This guarantees that we do not perform more than $n/\delta$ and degrades the approximation ratio by no more than $2 \delta$. This yields an algorithm that runs in $O(n^{\ell + 1} / \delta)$ time.

\section{Conclusions and open problems}

Some generalizations are easy to handle: when the discount function $f$ is non-symmetric, when the block sizes are non-uniform. There are a few interesting questions that remain unanswered:
\begin{enumerate}
    \item Is there an $O(1)$-approximation in polynomial time, independent of $k$?
    \item Is there an exact $\Oh(f(k, t) n^{O(k)})$ time algorithm where $t$ is the treewidth of the instance?
    \item Is there an $O(c^n)$ time algorithm where $c > 1$ is some constant?
\end{enumerate}

%\todo{Question: Can we get an $O(1)$ approximation in poly-time (independent of k)?}

%\todo{Question: Is there an optimal $\Oh(f(k, t) n^{O(k)})$ time algorithm where $t$ is the treewidth of the instance?}

%\todo{Question: Is there an $O(c^n)$ time algorithm where $c > 1$ is some constant?}

Note that we cannot expect $(1+\epsilon)$-approximation even in $n^{O(k)}$ time since that would contradict APX-hardness of \prob{Max TSP}, and we cannot expect to get exact algorithms for bounded treewidth instances in $n^{o(k)}$ time either due to \prob{Min Bandwidth} hardness.

\section*{Acknowledgments} We would like to thank Vahid Liaghat for fruitful discussions on the \prob{Ext-TSP} problem.

\bibliographystyle{plainurl}
\bibliography{refs}

\end{document}